\newtheorem{proposition}{Proposition}[section]
\def\first{{ \mathsf{first}}}
\def\last{{ \mathsf{last}}}
\newtheorem{Rem}{Remark}
\author{A. Bernini\thanks{Dipartimento di Matematica e Informatica ``U. Dini'', Universit\`a degli Studi di Firenze, Viale
 G.B. Morgagni 65, 50134 Firenze, Italy. {
 \tt \ antonio.bernini@unifi.it,\quad stefano.bilotta@unifi.it,\quad renzo.pinzani@unifi.it}}
 \and S. Bilotta$^*$\and R. Pinzani$^*$\and V. Vajnovszki\thanks{LE2I, Universit\'e de Bourgogne, BP 47 870, 21078 Dijon Cedex, 
 France{ \tt \ vvajnov@u-bourgogne.fr}}}
\title{A Gray Code for cross-bifix-free sets}
\begin{document}

\maketitle

\begin{abstract}
A cross-bifix-free set of words is a set in which no prefix of any length of any
word is the suffix of any other word in the set. A construction of
cross-bifix-free sets has recently been proposed by Chee {\it et al.} in 2013
within a constant factor of optimality. We propose a \emph{trace partitioned} Gray code for these cross-bifix-free sets
and a CAT algorithm generating it.
\end{abstract}

\section{Introduction}
A cross-bifix-free set of words is a set where, given any two words over an alphabet,
possibly the same, any prefix of the first one is not a suffix of the second one
and {\it vice versa}. Cross-bifix-free sets are involved in the study of distributed sequences
for frame synchronization \cite{de lind}. The problem of determining such sets is also related
to several other scientific applications, for instance in pattern
matching \cite{croche} and automata theory \cite{berstel}.

Fixed the cardinality $q$ of the alphabet and the length $n$ of the
words, a matter is the construction of a cross-bifix-free set with the cardinality
as large as possible. An interesting method has been proposed in \cite{Bajic} for words over
a binary alphabet. In a recent paper \cite{singa} the authors revisit
the construction of \cite{Bajic} and generalize it obtaining cross-bifix-free sets of words
with greater cardinality over an alphabet of arbitrary size. They also show that their cross-bifix-free
sets have a cardinality close to the maximum possible; and to our knowledge this is the best result 
in literature about the size of cross-bifix-free sets.

It is worth to mention that an intermediate step between
the original method \cite{Bajic} and its generalization in \cite{singa} has been proposed in 
\cite{cross}: it is constituted by
a different construction of binary cross-bifix-free sets based on lattice paths
which allows to obtain greater cardinality if compared to the ones in \cite{Bajic}.

\medskip

Once a class of objects is defined, in our case words, often it could be useful to list or generate
them according to a particular criterion. A special way to do this is their generation
in a way such that any two consecutive words differ as little as
possible, i.e., in Gray code order \cite{G}. In the case the objects are words, as in our, we can
specialize the concept of Gray code saying that it is \emph{an infinite set of word-lists
with unbounded word-length such that the Hamming distance between any two
adjacent words is bounded independently of the word-length} \cite{Wa1} (the Hamming
distance is the number of positions in which the two successive words differ \cite{H}). Gray
codes find useful applications in circuit testing, signal encoding, data compression,
telegraphy, error correction in digital communication and others. They are
also widely studied in the context of combinatorial
objects as: permutations \cite{J},
Motzkin and Schr\"oder words \cite{V2}, derangements \cite{BV},
involutions \cite{Wa2}, compositions, combinations, set-partitions \cite{R,Sa}, and so on.

\medskip

In this work we propose a Gray code for
the cross-bifix-free set $S_{n,q}^{(k)}$ defined in \cite{singa}. It is formed by
length $n$ words over the $q$-ary alphabet $A=\{0,1,\ldots,q-1\}$ containing a particular sub-word avoiding $k$
consecutive 0's (for more details see the next section). First we propose a Gray code for $S_{n,2}^{(k)}$
over the binary alphabet $\{0,1\}$, then we expand each binary word to the alphabet $A$.
The expansion of a binary word $\alpha$ is obtained replacing all the $1$'s with the symbols of $A$ different
from 0 producing a set of words with the same \emph{trace} $\alpha$. The Gray code we get is
\emph{trace partitioned} in the sense that all the words with the same trace are consecutive.

\section{Definitions and tools}

\noindent Let $n \ge 3$, $q \ge 2$ and $1 \le k \le n-2$. The cross-bifix-free set $S_{n,q}^{(k)}$
defined in \cite{singa} 
is the set of all length $n$ words $s_1 s_2 \cdots s_n$ over the
alphabet $\{0,\dots,q-1\}$ satisfying:
\begin{itemize}
\item $s_1 = \dots = s_k = 0$;
\item $s_{k+1} \ne 0$;
\item $s_n \ne 0$;
\item the subword $s_{k+2} \dots s_{n-1}$ does not contain $k$ consecutive 0's.
\end{itemize}

Throughout this paper we are going to use several standard notations
which are typical in the framework of  sets and lists
of words. For the sake of clearness we
summarize the ones used here.

For a set of words $L$ over an alphabet $A$ we denote by $\mathcal L$
an ordered list for $L$, and

\begin{itemize}

\item $\overline{\mathcal L}$ denotes the list obtained by covering $\mathcal L$ in reverse order;

\item if $\mathcal L'$ is another list, then $\mathcal L\circ\mathcal L'$
is the concatenation of the two lists, obtained by appending the words of
$\mathcal L'$ after those of $\mathcal L$;

\item $\first(\mathcal L)$ and $\last(\mathcal L)$ are the first and the last word
of $\mathcal L$, respectively;

\item if $u$ is a word in $A^*$, then $u\cdot\mathcal L$ (resp. $\mathcal L \cdot u$) is a new list where
each word has the form $u\omega$ (resp. $\omega u$) where $\omega$ is any word of $\mathcal L$;

\item if $u$ is a word in $A^*$, then $|u|$ is its length, and
$u^n = \underbrace{uuu\ldots u}_{n}$.
\end{itemize}


\noindent
For our purpose we need a Gray code list for the set of words of a certain length over the
$(q-1)$-ary
alphabet $\{1,2,\ldots,q-1\}$, $q\geq3$.
An obvious generalization of the Binary Reflected Gray Code
\cite{G} to the alphabet $\{1,2,\ldots,q-1\}$ is the list
$\mathcal{G}_{n,q}$ for the set of words $\{1,2,\ldots,q-1\}^n$
defined in \cite{E,Wi} where is also
shown that it is a Gray code with Hamming distance~$1$.
The authors defined this list as:

\begin{equation}
\label{g_nq}
\mathcal{G}_{n,q}=\left\{
\begin{array}{cr}
\lambda &\ \mathrm{if}\ n=0,\\
\\
1\cdot\mathcal{G}_{n-1,q}\circ
2\cdot\overline{\mathcal{G}_{n-1,q}}\circ
\cdots\circ
(q-1)\cdot\mathcal{G}_{n-1,q}' & \mathrm{if}\ n>0,
\end{array}
\right.
\end{equation}
where $\mathcal{G}_{n-1,q}'$ is $\mathcal{G}_{n-1,q}$ or
$\overline{\mathcal{G}_{n-1,q}}$
according on whether $q$ is even or odd.
The reader can easily
verify the following proposition.

\begin{proposition}\label{first_last}
For $q\geq 3$,
\begin{itemize}
\item $\first(\mathcal{G}_{n,q})=1^n$;
\item $\last(\mathcal{G}_{n,q})=(q-1)1^{n-1}$
      if $q$ is odd, and $(q-1)^n$ if $q$ is even.
\end{itemize}
\end{proposition}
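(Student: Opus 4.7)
The plan is to prove both statements simultaneously by induction on $n$, directly from the recursive definition in~(\ref{g_nq}). The base case $n=1$ is transparent: since $\mathcal{G}_{0,q}=\lambda$, the list $\mathcal{G}_{1,q}$ reduces to the one-letter words $1,2,\ldots,q-1$, so $\first(\mathcal{G}_{1,q})=1=1^1$ and $\last(\mathcal{G}_{1,q})=q-1$, which matches $(q-1)\cdot 1^0$ when $q$ is odd and $(q-1)^1$ when $q$ is even.

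For the inductive step, suppose the assertion holds at length $n-1$. The first block of $\mathcal{G}_{n,q}$ is $1\cdot\mathcal{G}_{n-1,q}$, hence
\[
\first(\mathcal{G}_{n,q})=1\cdot\first(\mathcal{G}_{n-1,q})=1\cdot 1^{n-1}=1^n
\]
by the induction hypothesis, which settles the first bullet. The last word of $\mathcal{G}_{n,q}$ is read off from the terminal block $(q-1)\cdot\mathcal{G}_{n-1,q}'$. If $q$ is even, the recursion sets $\mathcal{G}_{n-1,q}'=\mathcal{G}_{n-1,q}$, so
\[
\last(\mathcal{G}_{n,q})=(q-1)\cdot\last(\mathcal{G}_{n-1,q})=(q-1)\cdot(q-1)^{n-1}=(q-1)^n,
\]
using the induction hypothesis in the even case. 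If $q$ is odd, then $\mathcal{G}_{n-1,q}'=\overline{\mathcal{G}_{n-1,q}}$, whose last word is $\first(\mathcal{G}_{n-1,q})=1^{n-1}$, yielding $\last(\mathcal{G}_{n,q})=(q-1)\cdot 1^{n-1}$.

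There is no real obstacle; the only point requiring care is to notice that the parity-dependent convention for $\mathcal{G}_{n-1,q}'$ in~(\ref{g_nq}) is precisely calibrated so that the terminal sub-block ends at the word the proposition predicts, and the same parity tracking makes the induction close uniformly for all $q\geq 3$.
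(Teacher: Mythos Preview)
Your induction on $n$ from the recursion~(\ref{g_nq}) is correct and is exactly the routine verification the paper has in mind; the paper itself omits the proof, remarking only that the reader can easily verify the proposition.
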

\medskip

\noindent
Now we are going to present another tool we need in the paper.
If $\beta$ is a binary word of length $n$ such that $|\beta|_1=t$ (the number of 1's in
$\beta$), we define the \emph{expansion} of $\beta$, denoted by $\epsilon(\beta)$,
as the list of $(q-1)^t$ words, where the $i$-th word is obtained by replacing
the $t$ 1's of $\beta$ by the $t$ symbols (read from left to right) of the $i$-th word in
$\mathcal G_{t,q}$.
For example, if $q=3$ and $\beta=01011$ (the trace), then $\mathcal G_{3,3}=(111,112,122,121,221,222,212,211)$ and
$\epsilon(\beta)=(01011,01012,01022,01021,02021,02022,02012,02011).$
Notice that in particular $\first(\epsilon(\beta))=\beta$ and all the words of
$\epsilon(\beta)$ have the same trace.

We observe that $\epsilon(\beta)$ is the list obtained from
$\mathcal G_{t,q}$ inserting
some 0's, each time in the same positions.
Since $\mathcal G_{t,q}$
is a Gray code and the insertions of the 0's does not change the
Hamming distance between two successive word of
$\epsilon(\beta)$ (which is 1), the following proposition holds.

\begin{proposition}\label{epsilon}
For any $q\geq 3$ and binary word $\beta$, the list $\epsilon(\beta)$ is a Gray code.
\end{proposition}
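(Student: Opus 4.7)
The plan is to verify that the construction of $\epsilon(\beta)$ faithfully transports the Gray code property of $\mathcal{G}_{t,q}$ to a longer alphabet. First I would formalize the construction: let $t = |\beta|_1$ and let $P = \{p_1 < p_2 < \cdots < p_t\}$ be the positions of the $1$'s in $\beta$. Define a map $\phi : \{1,2,\ldots,q-1\}^t \to \{0,1,\ldots,q-1\}^n$ that sends a word $w = w_1 w_2 \cdots w_t$ to the length $n$ word whose entry at position $p_i$ equals $w_i$ for each $i$, and whose entry at every position in $\{1,2,\ldots,n\} \setminus P$ equals $0$. By the definition of $\epsilon(\beta)$ given before the statement, $\epsilon(\beta)$ is exactly the image of $\mathcal{G}_{t,q}$ under $\phi$, taken in the same order.

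Next I would observe that $\phi$ preserves Hamming distance. For any two words $w, w' \in \{1,2,\ldots,q-1\}^t$, the images $\phi(w)$ and $\phi(w')$ coincide at every position outside $P$ (both are $0$ there), while at position $p_i$ they differ if and only if $w_i \ne w'_i$. Hence the Hamming distance between $\phi(w)$ and $\phi(w')$ equals that between $w$ and $w'$.

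Combining these two steps with the fact, recalled from \cite{E,Wi} just above the statement, that $\mathcal{G}_{t,q}$ is a Gray code of Hamming distance $1$, it follows that two consecutive words of $\epsilon(\beta)$ differ in exactly one position. Since this bound is independent of the word length, $\epsilon(\beta)$ qualifies as a Gray code in the sense given in the introduction.

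There is essentially no hard step here; the statement is a direct consequence of the informal observation already made in the paragraph preceding it. The only point deserving explicit mention is that the \emph{same} set of positions $P$ is used to insert the $0$'s for every entry of the list, because otherwise the distance-preserving property of $\phi$ would fail. Once this is pointed out, the proof reduces to one line.
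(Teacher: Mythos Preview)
Your proof is correct and follows precisely the approach the paper takes: the paper observes that $\epsilon(\beta)$ is obtained from $\mathcal{G}_{t,q}$ by inserting $0$'s in the same fixed positions, so the Hamming distance between consecutive words is unchanged. Your map $\phi$ is simply a formalization of this insertion, and your emphasis that the same set $P$ is used throughout is exactly the point the paper makes informally with the phrase ``each time in the same positions.''
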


\section{Trace partitioned Gray code for $S_{n,q}^{(k)}$}\label{Vajno}

Our construction of a Gray code for the set $S_{n,q}^{(k)}$
of cross-bifix-free words is based on two other lists:

\begin{itemize}
\item $\mathcal{F}_n^{(k)}$, a Gray code for the set of binary words
      of length $n$ avoiding $k$ consecutive $0$'s, and
\item $\mathcal H_{n,q}^{(k)}$, a Gray code for the set of $q$-ary words
      of length $n$ which begin and end by a
      non zero value and avoiding $k$ consecutive $0$'s.
      In particular,  $\mathcal H_{n,2}^{(k)}=1\cdot \mathcal F_{n-2}^{(k)}\cdot 1$.
\end{itemize}

Finally, we will define the Gray code list  $\mathcal S_{n,q}^{(k)}$ for the set
$S_{n,q}^{(k)}$ as $0^k\cdot\mathcal H_{n-k,q}^{(k)}$.

\subsection{The list $\mathcal{F}_n^{(k)}$}

Let $\mathcal C_n$ be the list of binary words defined as:
\begin{equation}
\label{bin_G}
\mathcal C_n=
\left\{
\begin{array}{cr}
\lambda&\quad\mbox{if}\quad n=0,\\
\\
1\cdot\overline{\mathcal C_{n-1}}\circ 0\cdot\mathcal C_{n-1}&\quad\mbox{if}\quad n\geq 1,

\end{array}
\right.
\end{equation}
with $\lambda$ the empty word. The list $\mathcal C_n$ is a
Gray code for the set $\{0,1\}^n$ and it is a slight
modification of the original Binary Reflected Gray Code list defined in \cite{G}.

\medskip
By the definition of $\mathcal C_n$ given in relation (\ref{bin_G}), we have for $n\geq 1$,
\begin{itemize}
\item
$\last(\mathcal C_n)=0\cdot\last(\mathcal C_{n-1})=0^n$;
\item
$\first(\mathcal C_n)=1\cdot\first(\overline{\mathcal C_{n-1}})=
1\cdot\last(\mathcal C_{n-1})=10^{n-1}$.
\end{itemize}

\medskip

Let now define the list $\mathcal{F}_n^{(k)}$ of length $n$ binary words as:
\begin{equation}
\label{f_nk}
\mathcal{F}_n^{(k)}=\left\{
\begin{array}{cr}
\mathcal C_n &\ \mathrm{if}\ 0\leq n<k,\\
\\
1\cdot\overline{\mathcal F_{n-1}^{(k)}}\circ 01\cdot\overline{\mathcal F_{n-2}^{(k)}}
\circ 001\cdot\overline{\mathcal F_{n-3}^{(k)}}\circ\cdots\circ 0^{k-1}1\cdot
\overline{\mathcal F_{n-k}^{(k)}}& \ \mathrm{if}\ n\geq k.
\end{array}
\right.
\end{equation}

For $k\geq 2$ and $n\geq 0$, $\mathcal{F}_n^{(k)}$ is a list for the set of
length $n$ binary words with no $k$ consecutive~$0$'s, and
Proposition \ref{Grayness_F} says that it is a Gray code (actually,
$\mathcal{F}_n^{(k)}$ is a adaptation of a similar list defined earlier
 \cite{V1}).

It is easy to see that the number of binary words in $\mathcal F_n^{(k)}$ is
given by $f_n^{(k)}$, the well known $k$-Fibonacci integer sequence defined by:
$$
f_n^{(k)}=
\left\{
\begin{array}{cl}
2^n & \mbox{ if}\ 0\leq n< k,\\
\\
f_{n-1}^{(k)}+f_{n-2}^{(k)}+\cdots+f_{n-k}^{(k)}, & \mbox{ if}\ n\geq k,
\end{array}
\right.
$$
and the words in $\mathcal F_n^{(k)}$
are said \emph{$k$-generalized Fibonacci words}.
For example, the list $\mathcal{F}_3^{(3)}$ for the length $3$ binary
words avoiding $3$ consecutive $0$'s is
$$\mathcal F_3^{(3)}=~(100,101,111,110,010,011,001).$$


\begin{proposition} $ $
\label{last_first}
\begin{itemize}
\item $\first(\mathcal F_n^{(k)})$ is the length $n$
prefix of the infinite periodic word $(10^{k-1}1)(10^{k-1}1)\ldots$;
\item $\last(\mathcal F_n^{(k)})$ is the length $n$
prefix of the infinite periodic word $(0^{k-1}11)(0^{k-1}11)\ldots$.
\end{itemize}
\end{proposition}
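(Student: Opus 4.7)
The plan is to prove both bullet points simultaneously by strong induction on $n$, exploiting the recursive definition of $\mathcal F_n^{(k)}$ and the fact that the two infinite periodic words in the statement are shifts of one another.

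For the base case, $0\le n<k$, we have $\mathcal F_n^{(k)}=\mathcal C_n$, and we have already noted that $\first(\mathcal C_n)=10^{n-1}$ (with the convention that the empty word handles $n=0$) and $\last(\mathcal C_n)=0^n$. These coincide with the length $n$ prefixes of $(10^{k-1}1)^\omega$ and $(0^{k-1}11)^\omega$ respectively, because for $n\le k-1$ neither prefix reaches the second $1$ of its period. The only subtlety is $n=k-1$, handled directly as $\first=10^{k-2}$ and $\last=0^{k-1}$.

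For the inductive step, $n\ge k$, I read off from \eqref{f_nk} that
\[
\first(\mathcal F_n^{(k)})=1\cdot\first(\overline{\mathcal F_{n-1}^{(k)}})=1\cdot\last(\mathcal F_{n-1}^{(k)}),
\qquad
\last(\mathcal F_n^{(k)})=0^{k-1}1\cdot\last(\overline{\mathcal F_{n-k}^{(k)}})=0^{k-1}1\cdot\first(\mathcal F_{n-k}^{(k)}).
\]
By induction, $\last(\mathcal F_{n-1}^{(k)})$ is the length $n-1$ prefix of $(0^{k-1}11)^\omega$, and $\first(\mathcal F_{n-k}^{(k)})$ is the length $n-k$ prefix of $(10^{k-1}1)^\omega$. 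The proof then reduces to the two infinite-word identities
\[
1\cdot(0^{k-1}11)^\omega=(10^{k-1}1)^\omega,
\qquad
0^{k-1}1\cdot(10^{k-1}1)^\omega=(0^{k-1}11)^\omega,
\]
both of which are immediate since each side has its $1$'s located in the same positions (namely, at positions $i(k+1)$ and $i(k+1)+1$ for $i\ge 0$, with appropriate shift). Prepending $1$ to a length $n-1$ prefix of the first infinite word yields a length $n$ prefix of the second, and similarly for the $\last$ case.

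I expect the only mildly delicate point to be verifying that the two infinite periodic words are indeed shifts of each other; once this is stated, the induction is almost a bookkeeping exercise, and the recursive form of $\mathcal F_n^{(k)}$ makes the inductive step directly available. The base case requires only a small check that the $\mathcal C_n$ values agree with the periodic prefixes in the range $0\le n<k$, which is transparent because the periodic words both begin with a long block of $0$'s or a single isolated $1$ in the relevant range.
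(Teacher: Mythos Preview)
Your proof is correct and follows essentially the same approach as the paper: induction on $n$ via the recursive definition~(\ref{f_nk}), with the base cases read off from $\mathcal C_n$. The only cosmetic difference is that the paper proves the two items sequentially (unrolling the recursion for $\first$ twice to obtain $\first(\mathcal F_n^{(k)})=10^{k-1}1\cdot\first(\mathcal F_{n-k-1}^{(k)})$, and then deducing the $\last$ formula from the $\first$ one), whereas you handle both simultaneously via a single unrolling and the shift identity $1\cdot(0^{k-1}11)^\omega=(10^{k-1}1)^\omega$; the content is the same.
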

\begin{proof}
For the first point,
if $1\leq n<k$, then $\first(\mathcal F_n^{(k)})=\first(\mathcal C_n)=10^{n-1}$; and
if $n=k$, then $\first(\mathcal F_n^{(k)})=
1\cdot\first(\overline{\mathcal F_{n-1}^{(k)}})=
1\cdot\last(\mathcal C_{n-1})=10^{k-1}$, and the statement holds
in both cases.

Now, if $n> k$, by the definition of $\mathcal F_n^{(k)}$ we have
\begin{eqnarray*}
\first(\mathcal F_n^{(k)}) & = & 1\cdot\first(\overline{\mathcal F_{n-1}^{(k)}})\\
                                & = & 1\cdot\last(\mathcal F_{n-1}^{(k)})\\
                                & = & 10^{k-1}1\cdot\last(\overline{\mathcal F_{n-k-1}^{(k)}})\\				
                                & = & 10^{k-1}1\cdot\first(\mathcal F_{n-k-1}^{(k)}),				
\end{eqnarray*}
and recursion on $n$ completes the proof.

For the second point,
if $1\leq n<k$, then $\last(\mathcal F_n^{(k)})=\last(\mathcal C_n)=0^n$; and
if $n=k$, then $\last(\mathcal F_n^{(k)})=
0^{k-1}1$, and the statement holds
in both cases.

Now, if $n> k$, we have
\begin{eqnarray*}
\last(\mathcal F_n^{(k)}) & = & 0^{k-1}1\cdot\last(\overline{\mathcal F_{n-k}^{(k)}})\\
                               & = & 0^{k-1}1\cdot\first(\mathcal F_{n-k}^{(k)}),				
\end{eqnarray*}
and by the first point of the present proposition, recursion on $n$ completes the proof.
\end{proof}

\begin{proposition}
\label{Grayness_F}
The list $\mathcal F_n^{(k)}$ is a Gray code where
two consecutive strings differ in a single position.
\end{proposition}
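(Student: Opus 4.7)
The plan is to proceed by induction on $n$. The base case $0 \leq n < k$ is immediate since $\mathcal F_n^{(k)}$ equals $\mathcal C_n$, a slight variant of the Binary Reflected Gray Code whose consecutive words are already known to differ in a single position.

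For the inductive step $n \geq k$, I would view $\mathcal F_n^{(k)}$ as the ordered concatenation of the $k$ sub-lists
\[
L_i = 0^{i-1}1\cdot\overline{\mathcal F_{n-i}^{(k)}},\qquad i=1,2,\ldots,k.
\]
Within each $L_i$, two consecutive words differ in exactly one position: by the inductive hypothesis $\mathcal F_{n-i}^{(k)}$ has this property, reversing a list preserves the Hamming distances between adjacent entries, and prepending the common prefix $0^{i-1}1$ also preserves them. Consequently only the $k-1$ transitions $\last(L_i)\to\first(L_{i+1})$ need verification.

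For each $i\in\{1,\ldots,k-1\}$, using that reversal swaps first and last of a list, I would write
\[
\last(L_i) = 0^{i-1} 1 \cdot \first(\mathcal F_{n-i}^{(k)}), \qquad \first(L_{i+1}) = 0^i 1 \cdot \last(\mathcal F_{n-i-1}^{(k)}),
\]
and then invoke Proposition~\ref{last_first}: $\first(\mathcal F_{n-i}^{(k)})$ is the length-$(n-i)$ prefix of the periodic word $(10^{k-1}1)^\infty$, while $\last(\mathcal F_{n-i-1}^{(k)})$ is the length-$(n-i-1)$ prefix of $(0^{k-1}11)^\infty$. The key observation is that these two infinite words are cyclic shifts of one another, namely the second is obtained from the first by dropping its initial symbol. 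Hence in the comparison of the two composite words, the portion from position $i+2$ onward is identical, position $i+1$ is $1$ in both (the leading symbol of $(10^{k-1}1)^\infty$ on one side, an explicit $1$ on the other), positions $1,\ldots,i-1$ are zero in both, and the only discrepancy is at position $i$, which is $1$ in $\last(L_i)$ and $0$ in $\first(L_{i+1})$.

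The main obstacle is precisely this boundary analysis: it hinges on correctly unwinding Proposition~\ref{last_first} and spotting the one-symbol cyclic-shift relation between $(10^{k-1}1)^\infty$ and $(0^{k-1}11)^\infty$. Once this shift is identified, the single-bit difference at position $i$ follows at once, closing the induction.
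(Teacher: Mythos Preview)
Your proof is correct and follows essentially the same route as the paper: reduce to checking the $k-1$ boundary transitions between consecutive sub-lists, compute $\last(L_i)$ and $\first(L_{i+1})$ via Proposition~\ref{last_first}, and observe that they differ only at position $i$. Your version is in fact more explicit than the paper's --- you spell out the induction and the one-symbol cyclic-shift relation between $(10^{k-1}1)^\infty$ and $(0^{k-1}11)^\infty$, whereas the paper simply asserts that the two words differ only in position $\ell$ after quoting the prefixes.
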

\begin{proof}
It is enough to prove that there is a `smooth' transition
between any two consecutive lists in the definition
of $\mathcal F_n^{(k)}$ given in relation (\ref{f_nk}),
that is, for any $\ell$, $1\leq \ell\leq k-1$, the words

$$\alpha=0^{\ell-1}1\cdot \last(\overline{\mathcal F_{n-\ell}^{(k)}})=
0^{\ell-1}1\cdot \first(\mathcal F_{n-\ell}^{(k)})$$
and
$$\beta=0^{\ell}1\cdot \first(\overline{\mathcal F_{n-\ell-1}^{(k)}})=
0^{\ell}1\cdot \last(\mathcal F_{n-\ell-1}^{(k)})
$$
differ in a single position.
By Proposition \ref{last_first},
$$\alpha=0^{\ell-1}1\alpha'$$
and
$$\beta=0^\ell1\beta'$$
with $\alpha'$ and $\beta'$ appropriate length prefixes of
$(10^{k-1}1)(10^{k-1}1)\ldots$ and
$(0^{k-1}11)(0^{k-1}11)\ldots$, and so $\alpha$ and $\beta$
differ only in position $\ell$.
\end{proof}

As a by-product of the proof of the previous proposition we have the following remark
which is critical in algorithm {\tt process} used for the generating algorithm
in Section \ref{gen_S}.

\begin{Rem}
\label{the_rem}
If $\alpha=a_1a_2\ldots a_n$ and $\beta=b_1b_2\ldots b_n$
are two successive words in $\mathcal F_n^{(k)}$ which differ in
position $\ell$, then either $\ell=n$ or $a_{\ell+1}=b_{\ell+1}=1$.
\end{Rem}

\subsection{The list $\mathcal H_{n,q}^{(k)}$}

Let $\mathcal H_{n,q}^{(k)}$ be the list defined by:

\begin{equation}
\mathcal H_{n,q}^{(k)}=
\epsilon(\alpha_1)\circ
\overline{\epsilon(\alpha_2)}\circ
\epsilon(\alpha_3)\circ
\overline{\epsilon(\alpha_4)}\circ \cdots\circ
\epsilon'(\alpha_{f_{n-2}^{(k)}})
\end{equation}
with $\alpha_i=1\phi_i1$ and $\phi_i$ is the
$i$-th binary word in the list $\mathcal F_{n-2}^{(k)}$, and
$\epsilon'(\alpha_{f_{n-2}^{(k)}})$ is
$\epsilon(\alpha_{f_{n-2}^{(k)}})$ or
$\overline{\epsilon(\alpha_{f_{n-2}^{(k)}})}$
according on whether $f_{n-2}^{(k)}$ is odd or even.

Clearly, $\mathcal H_{n,q}^{(k)}$ is a list
for the set of $q$-ary words of length $n$ which begin and end by a
non zero value, and with no $k$ consecutive $0$'s.
In particular,  $\mathcal H_{n,2}^{(k)}=1\cdot \mathcal F_{n-2}^{(k)}\cdot 1$.

\begin{proposition}
The list $\mathcal H_{n,q}^{(k)}$ is a Gray code.
\label{Gray_H}
\end{proposition}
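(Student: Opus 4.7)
The plan is to verify that any two consecutive entries of $\mathcal H_{n,q}^{(k)}$ differ in at most one position. Since the list is a concatenation of sublists of the form $\epsilon(\alpha_i)$ or $\overline{\epsilon(\alpha_i)}$, I would split the adjacencies into those internal to a single block and those that cross the boundary between two consecutive blocks. The internal adjacencies are Gray code adjacencies (of distance $1$) by Proposition \ref{epsilon}, and reversing a list does not alter Hamming distance, so these pose no difficulty.

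For a boundary between block $i$ and block $i+1$, the alternating orientations force the two neighbouring words to be either $\first(\epsilon(\alpha_i))$ and $\first(\epsilon(\alpha_{i+1}))$, or $\last(\epsilon(\alpha_i))$ and $\last(\epsilon(\alpha_{i+1}))$. Now $\phi_i$ and $\phi_{i+1}$ are consecutive in $\mathcal F_{n-2}^{(k)}$, so by Proposition \ref{Grayness_F} they differ in a single position; consequently $\alpha_i=1\phi_i 1$ and $\alpha_{i+1}=1\phi_{i+1}1$ differ in exactly one position $p$, which necessarily satisfies $2\le p\le n-1$.

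The \emph{first-first} case is immediate: since $\first(\epsilon(\alpha))=\alpha$, the two neighbours differ only at position $p$. For the \emph{last-last} case I would invoke Proposition \ref{first_last} and split on the parity of $q$. When $q$ is even, $\last(\mathcal G_{t,q})=(q-1)^t$, so $\last(\epsilon(\alpha))$ is $\alpha$ with every $1$ turned into $q-1$; the two last-words then agree outside position $p$ and carry a $0$ and a $q-1$ there. When $q$ is odd, $\last(\mathcal G_{t,q})=(q-1)1^{t-1}$, so $\last(\epsilon(\alpha))$ replaces only the first $1$ of $\alpha$ by $q-1$; because both $\alpha_i$ and $\alpha_{i+1}$ are forced to begin with $1$, that first $1$ sits at position $1$ in both, so the two last-words agree there (and at all positions other than $p$) and disagree at $p$ with entries $0$ and $1$.

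Combining these pieces yields Hamming distance $1$ across every consecutive pair, so $\mathcal H_{n,q}^{(k)}$ is a Gray code. The main obstacle is the bookkeeping in the odd-$q$ case, where one must check that the ``first $1$'' positions of $\alpha_i$ and $\alpha_{i+1}$ coincide; this is exactly where the fixed leading $1$ in the definition $\alpha_i=1\phi_i 1$ does the crucial work, since it guarantees that the differing position $p$ is interior and never alters which occurrence of $1$ is the first.
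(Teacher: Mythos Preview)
Your proposal is correct and follows essentially the same route as the paper: you handle internal adjacencies via Proposition~\ref{epsilon}, and for the boundary transitions you reduce to the \emph{first--first} and \emph{last--last} pairs, using $\first(\epsilon(\alpha))=\alpha$ and Proposition~\ref{first_last} with a parity split on $q$, together with the fixed leading $1$ in $\alpha_i=1\phi_i1$ to pin down the leftmost nonzero position in the odd-$q$ case. This is exactly the paper's argument, only spelled out with a bit more care.
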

\begin{proof}
From Proposition \ref{epsilon} it follows
that consecutive words in each list $\epsilon(\alpha_i)$ and
$\overline{\epsilon(\alpha_i)}$ differ in a single position
(and by $+1$ or $-1$ in this position).
To prove the statement it is enough to show that,
for two consecutive binary words $\phi_i$ and $\phi_{i+1}$
in $\mathcal F_{n-2}^{(k)}$,
both pair of words
\begin{itemize}
\item $\last(\epsilon(1\phi_i 1))$
and  $\first(\overline{\epsilon(1\phi_{i+1} 1)})=
\last(\epsilon(1\phi_{i+1} 1))$, and
\item $\last(\overline{\epsilon(1\phi_i 1)})=
\first(\epsilon(1\phi_i 1))$
and  ${\rm first}(\epsilon(1\phi_{i+1} 1))$
\end{itemize}
differ in a single position.

In the first case, by Proposition \ref{first_last}, the first symbols of
$\last(\epsilon(1\phi_i 1))$ and of
$\last(\epsilon(1\phi_{i+1} 1))$
are both $(q-1)$, and the other symbols are either $1$ if
$q$ is odd, or $(q-1)$ if $q$ is even; and since
$\phi_i$ and $\phi_{i+1}$ differ in a single position, the
result holds.

In the second case, $\first(\epsilon(1\phi_i 1))=1\phi_i 1$
and $\first(\epsilon(1\phi_{i+1} 1))=1\phi_{i+1} 1$,
and again the result holds.
\end{proof}

\subsection{The list $\mathcal S_{n,q}^{(k)}$}

Now we define the list $\mathcal S_{n,q}^{(k)}$ as

$$
\mathcal S_{n,q}^{(k)}=0^k\cdot \mathcal H_{n-k,q}^{(k)},
$$
and clearly, $\mathcal S_{n,q}^{(k)}$ is a list for the set
of cross-bifix-free words $S_{n,q}^{(k)}$.
In particular,

$$
\mathcal S_{n,2}^{(k)}=0^k1\cdot \mathcal F_{n-k-2}^{(k)}\cdot 1,
$$
for example, the set $\mathcal{S}_{8,2}^{(3)}$ of length $8$
binary cross-bifix-free words which begin by $000$ is

$$\mathcal{S}_{8,2}^{(3)}= 000 1 \cdot \mathcal{F}_{3}^{(3)} \cdot 1 =$$
$$=(00011001,00011011,00011111,00011101,00010101,00010111,00010011).$$

A consequence of Proposition \ref{Gray_H} is the next proposition.

\begin{proposition}\label{cross-gray}
The list $\mathcal{S}_{n,q}^{(k)}$ is a Gray code.
\end{proposition}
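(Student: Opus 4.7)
The plan is to observe that $\mathcal{S}_{n,q}^{(k)}$ is obtained from $\mathcal{H}_{n-k,q}^{(k)}$ by prepending the fixed prefix $0^k$ to every word in the list, and that such a uniform prefix operation preserves the Gray code property. All the real combinatorial content has already been packed into Proposition \ref{Gray_H}, so the present proof should be essentially a one-line reduction.

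More concretely, I would proceed as follows. First I would invoke the defining identity $\mathcal{S}_{n,q}^{(k)}=0^k\cdot\mathcal{H}_{n-k,q}^{(k)}$ stated just above. Next I would take two arbitrary consecutive words $u$ and $v$ in $\mathcal{S}_{n,q}^{(k)}$; by construction these are of the form $u=0^k u'$ and $v=0^k v'$ where $u'$ and $v'$ are consecutive in $\mathcal{H}_{n-k,q}^{(k)}$. Since the first $k$ symbols agree, the Hamming distance satisfies $d_H(u,v)=d_H(u',v')$, and by Proposition \ref{Gray_H} this latter distance is bounded (in fact equal to $1$ in position, with a $\pm 1$ change in value as the proof of Proposition \ref{Gray_H} shows). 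Hence consecutive words in $\mathcal{S}_{n,q}^{(k)}$ also differ in a single position, which is what is needed.

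The main (and essentially only) obstacle is to make sure that the prefixing operation $u\mapsto 0^k u$ is recognized as distance-preserving; there is no subtlety here, so the proof will be a short formal one rather than a technical argument. I would therefore keep the write-up to a few lines, explicitly citing Proposition \ref{Gray_H} for the bound on the Hamming distance in $\mathcal{H}_{n-k,q}^{(k)}$ and noting that cross-bifix-freeness is already guaranteed by construction so does not need to be re-verified.
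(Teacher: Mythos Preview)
Your proposal is correct and matches the paper's approach exactly: the paper states the proposition simply as ``a consequence of Proposition~\ref{Gray_H}'', relying on the definition $\mathcal{S}_{n,q}^{(k)}=0^k\cdot\mathcal{H}_{n-k,q}^{(k)}$ and the obvious fact that a fixed common prefix does not affect Hamming distance. Your write-up just makes this one-line reduction explicit.
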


For the sake of clearness, we illustrate the previous construction for the Gray code list
$\mathcal{S}_{8,3}^{(3)}$ on the alphabet $A=\{0,1,2\}$.
We have:
$$
\begin{array}{rcl}
%
\mathcal G_{3,3}&=&(111,112,122,121,221,222,212,211);\\
\\
\mathcal G_{4,3}&=&(1111,1112,1122,1121,1221,1222,1212,1211,2211,2212,2222,\\
            & &  \ 2221,2121,2122,2112,2111);\\
\\
\mathcal G_{5,3}&=&(11111,\ldots,12111,22111,\ldots,21111);
\end{array}
$$
and
$$\begin{array}{rcl}
\mathcal{S}_{8,3}^{(3)}&=&(00011001,00011002,00012002,00012001,00022001,00022002,\\
& & \ 00021002,00021001,00021011,\ldots,00011011,00011111,\ldots\\
& & \ \ldots,00021111,00021101,\ldots,00011101,00010101,00010102,\\
& & \ 00010202,00010201,00020201,00020202,00020102,00020101,\\
& & \ 00020111,\ldots,00010111,00010011,00010012,00010022,\\
& & \ 00010021,00020021,00020022,00020012,00020011).

\end{array}
$$

\section{Algorithmic considerations}

In this section we give a generating algorithm for
binary words in the list $\mathcal F_n^{(k)}$
and an algorithm expanding binary words; then, combining
them, we obtain a generating algorithm for the list $\mathcal H_{n,q}^{(k)}$,
and finally prepending $0^k$ to each word in $\mathcal H_{n-k,q}^{(k)}$
the list $\mathcal S_{n,q}^{(k)}$ is obtained.
The given algorithms are shown to be efficient.

\medskip

The list $\mathcal{F}_n^{(k)}$ defined in (\ref{f_nk})
has not a straightforward algorithmic implementation, and
now we explain how $\mathcal{F}_n^{(k)}$ can be defined
recursively as the concatenation
of at most two lists, then we will give a generating algorithm
for it. Let $\mathcal F_n^{(k)}(u)$, $0\leq u\leq k-1$, be the sublist
of $\mathcal F_n^{(k)}$ formed by strings beginning by at most
$u$ $0$'s. By the definition of $\mathcal{F}_n^{(k)}$, it follows that
$\mathcal{F}_n^{(k)}=\mathcal{F}_n^{(k)}(k-1)$, and

\begin{eqnarray*}
\mathcal{F}_n^{(k)}(0) & = & 1\cdot\overline{\mathcal F_{n-1}^{(k)}}\\
                       & = & 1\cdot\overline{\mathcal F_{n-1}^{(k)}(k-1)},
\end{eqnarray*}
and for $u>0$,

\begin{eqnarray*}
\mathcal{F}_n^{(k)}(u) & = &
1\cdot\overline{\mathcal F_{n-1}^{(k)}}\circ 01\cdot\overline{\mathcal F_{n-2}^{(k)}}\circ
\cdots\circ 0^u1\cdot
\overline{\mathcal F_{n-u-1}^{(k)}}\\
& = & 1\cdot\overline{\mathcal F_{n-1}^{(k)}}\circ 0\cdot (1\cdot\overline{\mathcal F_{n-2}^{(k)}}\circ
\cdots\circ 0^{u-1}1\cdot
\overline{\mathcal F_{n-u-1}^{(k)}})\\
& = & 1\cdot\overline{\mathcal F_{n-1}^{(k)}}\circ 0\cdot
 \mathcal{F}_{n-1}^{(k)}(u-1).
\end{eqnarray*}

By the above considerations we have the following proposition.

\begin{proposition}
\label{Pr_Fnku}
Let $k\geq 2$, $0\leq u\leq k-1$,
and $\mathcal F_n^{(k)}(u)$ be the list defined as:
\begin{equation}
\label{Fnku}
\mathcal F_n^{(k)}(u)=
\left\{
\begin {array}{ccc}
\lambda  & {\rm if} & n=0,\\ \\
1\cdot \overline{\mathcal F_{n-1}^{(k)}(k-1)}
& {\rm if} & n>0 {\ \rm and\ } u=0,\\ \\
1\cdot \overline{\mathcal F_{n-1}^{(k)}(k-1)}\,
\circ
0\cdot \mathcal F_{n-1}^{(k)}(u-1)
& {\rm if} & n,u>0.\\
\end {array}
\right.
\end{equation}
Then $\mathcal F_n^{(k)}(k-1)$ is the list
$\mathcal F_n^{(k)}$ defined by relation (\ref{f_nk}).
\end{proposition}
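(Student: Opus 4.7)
The plan is to prove the proposition by induction on $n$, using the manipulations displayed just above the statement as the body of the inductive step. The base case $n=0$ is immediate, since both (\ref{Fnku}) at $u=k-1$ and (\ref{f_nk}) at $n=0$ return $\lambda$.

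For the inductive step, I would invoke the induction hypothesis applied to each of $n-1, n-2, \ldots, n-k$ in order to replace every occurrence of $\mathcal{F}_m^{(k)}(k-1)$ in (\ref{Fnku}) by $\mathcal{F}_m^{(k)}$. A short secondary induction on $u$ then turns (\ref{Fnku}) into the closed-form expansion
\[
\mathcal{F}_n^{(k)}(u) = 1 \cdot \overline{\mathcal{F}_{n-1}^{(k)}} \circ 01 \cdot \overline{\mathcal{F}_{n-2}^{(k)}} \circ \cdots \circ 0^u 1 \cdot \overline{\mathcal{F}_{n-u-1}^{(k)}},
\]
which is exactly the three-line calculation reproduced immediately before the proposition's statement (read in reverse). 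Specialising $u = k-1$ recovers the right-hand side of (\ref{f_nk}) for $n \ge k$, which is the desired equality.

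For $n < k$, where (\ref{f_nk}) instead sets $\mathcal{F}_n^{(k)} = \mathcal{C}_n$, the same iteration still works but saturates early: a length-$n$ word has at most $n$ leading zeros, so the constraint ``at most $u$ leading $0$'s'' becomes vacuous once $u \ge n-1$, and the unrolling collapses to the reflected Gray code recursion $\mathcal{C}_n = 1 \cdot \overline{\mathcal{C}_{n-1}} \circ 0 \cdot \mathcal{C}_{n-1}$ from (\ref{bin_G}). The main obstacle is not the algebraic manipulation itself, which has essentially been carried out in the excerpt, but rather the bookkeeping required to treat the two regimes $n \ge k$ and $n < k$ uniformly, and to check that at each step the index $n-u-1$ is indeed nonnegative so that the recursive calls in (\ref{Fnku}) and (\ref{f_nk}) are well-defined.
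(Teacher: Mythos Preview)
Your proposal is correct and follows the same line as the paper, which simply states ``By the above considerations we have the following proposition'' after deriving the recursion (\ref{Fnku}) from the sublist interpretation; you carry out the inverse derivation explicitly by strong induction on $n$, unrolling (\ref{Fnku}) back to the closed form in (\ref{f_nk}) via exactly the three-line calculation displayed before the statement. Your added bookkeeping for the regime $n<k$, where the unrolling bottoms out at $\mathcal F_0^{(k)}(\cdot)=\lambda$ before $u$ reaches $0$ and one must recognise the resulting expression as $\mathcal C_n$ via repeated application of (\ref{bin_G}), is precisely the detail the paper leaves implicit.
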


Now we explain how the relation (\ref{Fnku}) defining the list
$\mathcal F_n^{(k)}(u)$ can be implemented in a generating algorithm.
It is easy to check that
$\mathcal F_n^{(k)}=\mathcal F_n^{(k)}(k-1)$ has the following properties:
for $\alpha=a_1a_2\ldots a_n$ and $\beta=b_1b_2\ldots b_n$
two consecutive binary words in $\mathcal F_n^{(k)}$,
there is a $p$ such that
\begin{itemize}
\item $a_i=b_i$ for all $i$, $1\leq i\leq n$, except
      $b_p=1-a_p$,
\item $0^{k-1}$ can not be a suffix of $a_1a_2\ldots a_{p-1}=b_1b_2\ldots b_{p-1}$,
\item the sublist of $\mathcal F_n^{(k)}$ formed by the strings with the prefix
      $b_1b_2\ldots b_p$ is $b_1b_2\ldots b_p\cdot \mathcal L$,
      where $\mathcal L$ is $\mathcal F_{n-p}^{(k)}(u-1)$ or $\overline{\mathcal F_{n-p}^{(k)}(u-1)}$
      according to the prefix  $b_1b_2\ldots b_p$ has an even or odd number of $1$'s, and
      $u$ is equal to $k$ minus the length of the maximal $0$ suffix of $b_1b_2\ldots b_p$.
\end{itemize}

Let us consider procedure {\tt gen\_fib} in Figure \ref{algo} where {\tt process} switches
the value of $b[pos]$ (that is, $b[pos]:=1-b[pos]$), and prints the
obtained binary string $b$.
By the above remarks and relation (\ref{Fnku}) in
Proposition \ref{Pr_Fnku} it follows that after the initialization of
$b$ by the first string in $\mathcal F_m^{(k)}$ (given in Proposition
\ref{last_first}) and printing
it out, the call of {\tt gen\_fib($1$,$k-1$,$0$)} produces the list
$\mathcal F_m^{(k)}$.
Moreover, as we will show below, for $m=n-1$ and after the appropriate initialization of $b=b_1b_2\dots b_n$
the call of {\tt gen\_fib($k+2$,$k-1$,$0$)} produces the list
$0^k1\cdot \mathcal F_{n-k-2}^{(k)}\cdot 1=\mathcal S_{n,2}^{(k)}$.

Procedure {\tt gen\_fib} is an efficient generating  procedure.
Indeed, each recursive call induced by {\tt gen\_fib} is either
\begin{itemize}
\item a terminal call (which does not produce other calls), or
\item a call producing two recursive calls, or
\item a call producing one recursive call, which in turn
      is in one of the previous two cases.
\end{itemize}
By {\it `CAT'} principle in \cite{Rb} it follows that  procedure {\tt gen\_fib}
runs in constant amortised time.

\begin{figure}[h]
\begin{center}
\begin{tabular}{|c|}
\hline
{\tt
\begin{minipage}[c]{.46\linewidth}
\begin{tabbing}
\hspace{1.cm}\=\hspace{0.9cm}\=\hspace{1.0cm}\=\hspace{2.9cm}
            \=\hspace{2.9cm}\kill
procedure gen\_fib($pos$,$u$,$dir$)\\
global $b$,$k$,$m$;\\
if $pos\leq m$\\
then \> if $u=0$ \\
     \> then gen\_fib($pos+1$,$k-1$,$1-dir$);\\
     \> else \>  if $dir=0$  \\
     \>      \>  then \> gen\_fib($pos+1$,$k-1$,$1$);\\
     \>      \>       \> process($pos$);\\
     \>      \>       \> gen\_fib($pos+1$,$u-1$,$0$);\\
     \>      \>  else \> gen\_fib($pos+1$,$u-1$,$1$);\\
     \>      \>       \> process($pos$);\\
     \>      \>       \> gen\_fib($pos+1$,$k-1$,$0$);\\
     \>      \>  end if \\
     \> end if \\
end if \\
end procedure.
\end{tabbing}
\end{minipage}
}\\
\hline
\end{tabular}
\caption{Algorithm producing the list $\mathcal F_n^{(k)}$
or $\mathcal S_{n,q}^{(k)}$, according to the
initial values of $m$, $b$ and the
definition of {\tt process} procedure.
\label{algo}}
\end{center}
\end{figure}

\subsection{Generating $\mathcal S_{n,2}^{(k)}$}

After the initialization of $b_1b_2\ldots b_n$ by
$0^k1\cdot \first(\mathcal F_{n-k-2}^{(k)})\cdot 1$, with
$\first(\mathcal F_{n-k-2}^{(k)})$ given in Proposition
\ref{last_first}, and printing it out, the call of {\tt gen\_fib($k+2$,$k-1$,$0$)} where
\begin{itemize}
\item $m=n-1$, and
\item procedure {\tt process} called by  {\tt gen\_fib}
      switches the value of $b[pos]$ (that is, $b[pos]:=1-b[pos]$)
      and prints $b$
\end{itemize}
produces, in constant amortized time,  the list
$0^k1\cdot \mathcal F_{n-k-2}^{(k)}\cdot 1=0^k\cdot \mathcal H_{n-k,2}^{(k)}$
which is, as mentioned before, the list $\mathcal S_{n,2}^{(k)}$.

\subsection{Generating $\mathcal S_{n,q}^{(k)}$, $q>2$}
\label{gen_S}

Before discussing the expansion algorithm {\tt expand}
needed to produce the list $\mathcal S_{n,q}^{(k)}$ when $q>2$
we show that {\tt gen\_tuple} procedure in Figure \ref{algo_qary}, on which {\tt expand} is
based, is an efficient generating algorithm for the list $\mathcal{G}_{n,q}$
defined in relation (\ref{g_nq}).
Procedure {\tt gen\_tuple} is a `naive' odometer principle based algorithm, see again \cite{Rb},
and we have the next proposition.

\begin{proposition}
\label{eff_en_tuple}
After the initialization of $v$ by $11\cdots 1$,
the first word in $\mathcal{G}_{n,q}$, and $d_i$ by $1$,
for $1\leq i\leq n$, procedure {\tt gen\_tuple}
produces the list $\mathcal{G}_{n,q}$ in constant amortized time.
\end{proposition}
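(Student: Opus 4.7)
The plan is twofold: first establish that \texttt{gen\_tuple} outputs precisely the list $\mathcal{G}_{n,q}$ defined by relation (\ref{g_nq}), and then bound the total number of elementary operations by $O(|\mathcal{G}_{n,q}|)$, which yields constant amortized time per word.

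For correctness I would argue by induction on $n$. After the initialization $v=1^n$ and $d_i=1$ for all $i$, the first word is emitted; the procedure then repeatedly performs an odometer step: it locates the leftmost position where the current symbol can still move in the direction recorded by $d$, updates $v$ there, and flips the direction flags of the positions it skipped over. The recursive description of $\mathcal{G}_{n,q}$ in (\ref{g_nq}) says exactly that the first coordinate is held fixed while the suffix runs through $\mathcal{G}_{n-1,q}$, after which the first coordinate is advanced and the suffix is traversed in reverse. An inductive invariant stating that, for each $i$, the pair $(v_i v_{i+1}\ldots v_n,\,d_i)$ represents the current state of the odometer for $\mathcal{G}_{n-i+1,q}$ traversed in direction $d_i$, propagates cleanly through each odometer step and identifies the output of \texttt{gen\_tuple} with the list defined by (\ref{g_nq}).

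For efficiency I would invoke the CAT principle of \cite{Rb}. The key observation is amortization: position $i$ is modified only when the positions $1,\ldots,i-1$ have all exhausted their current direction, which happens once every $(q-1)^{i-1}$ emitted words. Summing over positions, the total number of position updates performed during the whole generation is at most
\begin{equation*}
\sum_{i=1}^{n}\left\lceil\frac{(q-1)^n}{(q-1)^{i-1}}\right\rceil
\;\le\;(q-1)^n\cdot\frac{q-1}{q-2}\;=\;O\!\left(|\mathcal{G}_{n,q}|\right),
\end{equation*}
since $q\ge 3$. As every other elementary operation performed by \texttt{gen\_tuple} between two successive emissions is bounded by a constant multiple of the number of position updates it performs, the total work is $O(|\mathcal{G}_{n,q}|)$ and the amortized cost per word is $O(1)$.

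The main obstacle I expect is verifying the invariant linking the flag vector $d$ to the parity-driven reversal in (\ref{g_nq}): one must check that whenever the odometer carries past a position, the corresponding flip of $d_i$ is consistent with reversing the sublist that position is about to index. Once this invariant is stated carefully and proved by induction on the generation step, both the correctness of the order of output and the amortized bound on the work follow without further subtlety, completing the proof of Proposition~\ref{eff_en_tuple}.
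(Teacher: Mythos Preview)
Your plan is sound and would yield a valid proof, but it differs from the paper's on both fronts. The paper's proof does not address correctness at all; it goes straight to complexity by letting $c_n$ denote the total number of executions of the statement $i:=i-1$ in the inner \texttt{while} loop, setting up the recurrence $c_1=q-1$, $c_n=(q-1)\,n+q\,c_{n-1}$, solving it in closed form, and dividing by the number of generated words to obtain a constant bound. Your per-position summation reaches the same conclusion more directly, without solving a recurrence, and your explicit correctness discussion (via the invariant linking the direction vector $d$ to the reversals in~(\ref{g_nq})) is more thorough than anything the paper provides.

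One slip to correct: in \texttt{gen\_tuple} the index $i$ starts at $n$ and is \emph{decremented}, so position $i$ is updated precisely when positions $i+1,\ldots,n$ (not $1,\ldots,i-1$) have exhausted their current direction; hence position $i$ changes once every $(q-1)^{n-i}$ emitted words rather than every $(q-1)^{i-1}$. The resulting geometric sum is the same after reindexing, so your bound $(q-1)^n\cdot\frac{q-1}{q-2}=O(|\mathcal{G}_{n,q}|)$ and the conclusion are unaffected, but the verbal description of the carry mechanism should be adjusted.
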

\begin{proof}
The total amount of computation of {\tt gen\_tuple}
is proportional to the number of times the statement $i:=i-1$
is performed in the inner {\tt while} loop; and for a given $q$ and $n$
let denote by $c_n$ this number.
So, the average complexity (per generated word) of {\tt gen\_tuple}
is $\frac{c_n}{q^n}$.
Clearly, $c_1=q-1$ and $c_n=(q-1)\cdot n+q\cdot c_{n-1}$,
and a simple recursion shows that $c_n=q \cdot \frac{q^n-1}{q-1}-n$ and
finally the average complexity of {\tt gen\_tuple} is $\frac{c_n}{q^n}\leq \frac{q}{q-1}$.
\end{proof}

\begin{figure}[h]
\begin{center}
\begin{tabular}{|c|}
\hline
{\tt
\begin{minipage}[c]{.46\linewidth}
\begin{tabbing}
\hspace{0.6cm}\=\hspace{1.2cm}\=\hspace{1.0cm}\=\hspace{2.9cm}
            \=\hspace{2.9cm}\kill
procedure gen\_tuple\\
global $v$,$d$,$n$;\\
output $v$;\\
do \> $i:=n$;\\
   \> while \> $i\geq 1$ and\\
   \>       \> ($v[i]=q-1$ and $d[i]=1$ or $v[i]=1$ and $d[i]=-1$)\\
   \>       \> $d[i]:=-d[i]$;\\
   \>       \> $i:=i-1$;\\
   \> end while\\
   \> if $i\geq 1$ then $v[i]:=v[i]+d[i]$; output $v$; end if \\
while $i\geq 1$\\
end procedure.
\end{tabbing}
\end{minipage}
}\\
\hline
\end{tabular}
\caption{Odometer algorithm producing the list
$\mathcal{G}_{n,q}$.
\label{algo_qary}}
\end{center}
\end{figure}

Now we adapt algorithm {\tt gen\_tuple} in order to obtain
procedure {\tt expand} producing the expansion of a words;
and like  {\tt gen\_tuple}, procedure {\tt expand} has a constant average time complexity.
More precisely, for a words $b=b_1b_2\dots b_n$ in $\{0,1,\ldots ,q\}^n$,
with $b_{\ell+1},b_n\neq 0$
let $b'$ denote the {\it trace} of $b_{\ell+1}b_{\ell+2}\dots b_n$, that is, the word obtained
from $b_{\ell+1}b_{\ell+2}\dots b_n$ by replacing each non-zero value by $1$, and $b''$ that
obtained by erasing each $0$ letter in $b_{\ell+1}b_{\ell+2}\dots b_n$. Procedure {\tt expand}
produces the list:

\begin{itemize}
\item $b_1b_2\dots b_{\ell}\cdot\epsilon(b')$ if the initial value of $b$ is such that $b''$
is the first word in $\mathcal{G}_{|b''|,q}$, or
\item  $b_1b_2\dots b_{\ell}\cdot\overline{\epsilon(b')}$ if the initial value of $b$ is such that $b''$
is the last word in $\mathcal{G}_{|b''|,q}$.
\end{itemize}
The initial value of $d_{\ell+1},d_{\ell+2},\dots ,d_n$ are given by:
if $b_i=1$, then $d_i=1$; and
if $b_i=q-1$, then $d_i=-1$;
otherwise $d_i$ is not defined.
In order to access in constant time from a position $i$ in the current
word $b$, with  $b_i\neq 0$,  to the previous one, additional data structures are used.
The array $prec$ is defined by:
if $b_i\neq 0$, then $prec_i=j$, where
$j$ is the rightmost position in $b$, at the left of $i$ and with $b_j\neq 0$;
and for convenience $prec_i=0$ if $i$ is the leftmost non-zero position in $b$.

\begin{figure}[h]
\begin{center}
\begin{tabular}{|c|}
\hline
{\tt
\begin{minipage}[c]{.46\linewidth}
\begin{tabbing}
\hspace{0.6cm}\=\hspace{1.2cm}\=\hspace{1.0cm}\=\hspace{2.9cm}
            \=\hspace{2.9cm}\kill
procedure expand\\
global $b$,$d$,$\ell$,$n$,$prec$;\\
output $v$;\\
do \> $i:=n$;\\
   \> while \> $i\geq\ell+1$ and\\
   \>       \> ($b[i]=q-1$ and $d[i]=1$ or $b[i]=1$ and $d[i]=-1$)\\
   \>       \> $d[i]:=-d[i]$;\\
   \>       \> $i:=prec[i]$;\\
   \> end while\\
   \> if $i\geq\ell+1$ then $b[i]:=b[i]+d[i]$; output $b$; end if \\
while $i\geq\ell+1$\\
end procedure.
\end{tabbing}
\end{minipage}
}\\ \hline
\end{tabular}
\caption{Algorithm expanding a word $b$
and mimicking procedure {\tt gen\_tuple}.
\label{algo_expand}}
\end{center}
\end{figure}

\begin{figure}[h]
\begin{center}
\begin{tabular}{|c|}
\hline
{\tt
\begin{minipage}[c]{.46\linewidth}
\begin{tabbing}
\hspace{0.9cm}\=\hspace{1.2cm}\=\hspace{1.0cm}\=\hspace{2.9cm}
            \=\hspace{2.9cm}\kill
procedure process($pos$)\\
global $b$,$d$,$succ$,$prec$;\\
if $b[pos]=0$\\
then \> $a:=prec[pos+1]$; $succ[a]:=pos$; $succ[pos]:=pos+1$;\\
     \> $prec[pos]:=a$; $prec[pos+1]:=pos$;\\
     \> $b[pos]:= b[pos+1]$;\\
     \> $d[pos]:=d[pos+1]$;\\
else \>$a:=prec[pos]$; $z:=succ[pos]$;\\
     \>$prec[z]:=a$; $succ[a]:=z$;\\
     \>$b[pos]:=0$;\\
expand;\\
end procedure.
\end{tabbing}
\end{minipage}
}\\ \hline
\end{tabular}
\caption{Procedure {\tt process} called by {\tt gen\_fib}
in order to generate the list~$\mathcal S_{n,q}^{(k)}$.
\label{proc_process}}
\end{center}
\end{figure}

Now we explain procedure {\tt process}; it calls {\tt expand}
and we will show that when {\tt gen\_fib}
in turn calls procedure {\tt process} in Figure \ref{proc_process}, 
then it produces the list $\mathcal S_{n,q}^{(k)}$,
with $q>2$.
The parameter $pos$ of {\tt process} is given by the corresponding call
of {\tt gen\_fib}, and it gives the position in the current
word $b_1b_2\ldots b_n$ in $\mathcal S_{n,q}^{(k)}$ where $b_{pos}$ changes
from a non-zero value to $0$, or {\it vice versa}.
By Remark \ref{the_rem} and the definition of the list
$\mathcal S_{n,q}^{(k)}$ from $\mathcal H_{n-k,q}^{(k)}$,
and so from $\mathcal F_{n-k-2,q}^{(k)}$, it follows that
$b_{pos+1}\neq 0$.
Procedure {\tt process}, sets $b_{pos}$ to $0$ if previously $b_{pos}\neq 0$;
and sets $b_{pos}$ to $b_{pos+1}$ if previously $b_{pos}=0$,  which
according to Proposition \ref{first_last}, Remark \ref{the_rem} and the definition of the expansion
operation is the new value of $b_{pos}$.
In order to access in constant time from a non-zero position in the array
$b$ to the previous one, {\tt process} uses array $prec$ of procedure {\tt expand} 
and array $succ$, defined as:
$succ_i=j$, with $j$ the leftmost position in $b$, at the right of $i$ and with
$b_j\neq 0$, and $succ_i$ is not defined if $i$ is the rightmost non-zero position. 
In addition, procedure {\tt process} updates both
arrays $prec$ and $succ$.

For given $q>2$, $k\geq 2$ and  $n\geq k+2$, after the initialization of $b_1b_2\ldots b_n$ by
$0^k1\cdot \first(\mathcal F_{n-k-2}^{(k)})\cdot 1$,
as for generating $\mathcal S_{n,2}^{(k)}$, the call of {\tt gen\_fib($k+2$,$k-1$,$0$)} where
\begin{itemize}
\item $m=n-1$, and
\item procedure {\tt process} is that in Figure \ref{proc_process}, and
\item procedure {\tt expand} that in Figure \ref{algo_expand}, with $\ell=k+1$
\end{itemize}
produces, in constant amortized time, the list $\mathcal S_{n,q}^{(k)}$.

\section {Conclusion and further works}
The cross-bifix-free sets $S_{n,q}^{(k)}$ defined in 
\cite{singa} have the cardinality close to the optimum. They
are constituted by particular words
$s_1s_2\ldots s_n$ of length $n$ over a $q$-ary alphabet. Each word has the form $0^ks_{k+1}s_{k+2}\dots s_n$ where
$s_{k+1}$ and $s_n$ are different from $0$ and $s_{k+1}s_{k+2}\dots  s_{n-1}$ does not contain $k$ consecutive 0's.
We have provided a Gray code for $S_{n,q}^{(k)}$ by defining a Gray code for the words
$s_{k+1}s_{k+2}\dots s_n$ and then prepending the prefix $0^k$ to them. 
Moreover, an efficient generating algorithm for the obtained Gray code is given. 
We note that this Gray code is \emph{trace partitioned}
in the sense that all the words with the same trace are consecutive.
To this aim we used a Gray code for restricted binary strings \cite{V1}, opportunely replacing
the bits 1 with the symbols of the alphabet different from 0.

A future investigation could be the definition of a Gray code which is \emph{prefix partitioned}, where
all the words with the same prefix are consecutive. Actually,
the definition of the sets $S_{n,q}^{(k)}$ shows that it is sufficient to define
a prefix partitioned Gray code for the subwords $s_{k+1}s_{k+2}\dots s_n$.

An interesting question arising when one deals with a Gray code $\mathcal L$ on a set is the possibility
to define it in such a way that the Hamming distance between 
$\last (\mathcal L)$ and $\first (\mathcal L)$ is $1$ (circular Gray code).
Usually it is not so easy to have a circular Gray code, unless the elements of the set are
not subject to constraints; in our case it is worth to study if 
the ground-set we are dealing with (which is a cross-bifix free set) 
allows to find a circular Gray code.

\end{document}